\algnewcommand{\algorithmicassumption}{\textbf{Requirement:}}
\algnewcommand{\Assume}{\item[\algorithmicassumption]}
\algnewcommand{\InlineIf}[2]{
  \algorithmicif\ #1\ \algorithmicthen\ #2}
\algnewcommand{\InlineElse}[1]{
  \algorithmicelse\ #1}
\algnewcommand{\InlineIfElse}[3]{
  \algorithmicif\ #1\ \algorithmicthen\ #2\ \algorithmicelse\ #3}
\algnewcommand{\InlineFor}[2]{\algorithmicfor\ #1\ \algorithmicdo\ #2}
\algnewcommand{\CommentLine}[1]{\(\triangleright\) \emph{\small #1}}
\algnewcommand{\algorithmicand}{\textbf{and}}
\algnewcommand{\algorithmicor}{\textbf{or}}
\algnewcommand{\FOR}{\algorithmicfor}
\algnewcommand{\OR}{\algorithmicor}
\algnewcommand{\AND}{\algorithmicand}
\algnewcommand{\IF}{\algorithmicif}
\algnewcommand{\THEN}{\algorithmicthen}
\algnewcommand{\ELSE}{\algorithmicelse}
\algnewcommand{\True}{\textsc{True}}
\newcommand{\StateX}[1]{%
  \setlength\@tempdima{\algorithmicindent}%
  \Statex\hskip\dimexpr#1\@tempdima\relax}
\newcommand{\algoCaptionLabel}[3]{
     \caption[\textproc{#1}]{\textproc{#1}\ifthenelse{\equal{#2}{}}{}{$(#2)$} }%
     \NR@gettitle{\textproc{#1}}%
      \label{algo:#3}
     }%
\pgfplotsset{compat=newest}
\newcommand{\new}[1]{\textcolor{black}{#1}}
\newcommand{\mathnew}[1]{\mathcolor{black}{#1}}
\newcommand{\secnew}[1]{
    \sectionfont{\color{black}}
    \section{#1}
    \sectionfont{\color{black}}
}
\newtheorem{theorem}{Theorem}[]
\newtheorem{lemma}[theorem]{Lemma}
\Crefname{lemma}{Lemma}{Lemmas}
\Crefname{corollary}{Corollary}{Corollaries}
\Crefname{definition}{Definition}{Definitions}
\Crefname{proposition}{Proposition}{Propositions}
\Crefname{example}{Example}{Examples}
\newtheorem{remark}[theorem]{Remark}
\Crefname{remark}{Remark}{Remarks}
\title{Refined bit complexity for the computation of at least one point per
connected component of a smooth complete intersection real algebraic set}
\author[1]{Jesse Elliott} 
\author[1]{Mark Giesbrecht} 
\author[2]{Edern Gillot} 
\author[2]{Mohab {Safey El Din}}
\author[1]{Éric Schost}
\affil[1]{David R. Cheriton School of Computer Science, University of Waterloo,
ON, Canada}
\affil[2]{Sorbonne Université, CNRS, LIP6, Paris, France}
\date{April 10, 2026}
\DeclareMathOperator{\htt}{ht}
\DeclareMathOperator{\polar}{W}
\DeclareMathOperator{\jac}{jac}
\DeclareMathOperator{\GL}{GL}
\newcommand{\Nintegers}{\mathbb{N}}
\newcommand{\Zintegers}{\mathbb{Z}}
\newcommand{\ratfield}{\mathbb{Q}}
\newcommand{\realfield}{\mathbb{R}}
\newcommand{\compfield}{\mathbb{C}}
\newcommand{\bff}{\bm{f}}
\newcommand{\bd}{\bm{d}}
\newcommand{\bn}{\bm{n}}
\newcommand{\bx}{\bm{x}}
\newcommand{\by}{\bm{y}}
\newcommand{\bu}{\bm{u}}
\newcommand{\bmeta}{\bm{\eta}}
\newcommand{\bsigma}{\bm{\sigma}}
\newcommand{\bA}{{\bm{A}}}
\newcommand{\scrD}{\mathscr{D}}
\def\polarsys#1#2{\mathcal{S}_{#1}(#2)}
\begin{document}

\maketitle

\begin{abstract}
    We refine the bit complexity analysis of an algorithm for the computation of
    at least one point per connected component of a smooth real algebraic set,
    yielding exponential speedup (with respect to the number of variables)
    compared to prior works. The algorithm which is analyzed is based on the
    critical point method, reducing the problem to computations of critical
    points associated to the restriction of generic projections on lines to the
    studied variety.

    Our refinement, and the subsequent improved complexity statement, comes from
    a better utilization of the multi-affine structure of polynomial systems
    encoding these sets of critical points. The bit-size estimates on the size
    of the output produced by this algorithm are also improved by this
    refinement. 
\end{abstract}

\section{Introduction}

Computing at least one point in each connected component of a real algebraic or
a semi-algebraic set is a classical computational problem of real algebraic
geometry, which is used as a subroutine of higher-level algorithms such as
quantifier elimination (see e.g.~\cite{basupollackroy, Le_2021}), real root
classification (see e.g.~\cite{LeSa22, GaSa24}) or polynomial optimization (see
e.g.~\cite{GreSa14}).

As in~\cite{elliottgiesbrechtschostgeneral}, we focus on the
problem of computing at least one point per connected component of a real
algebraic set defined as the vanishing set of a sequence of polynomials 
\(\bff = (f_1, \ldots, f_{p})\subset \ratfield[x_1, \ldots,
x_{n}]\) satisfying the following regularity assumptions: 
\begin{itemize}
    \item the ideal generated by \(\langle f_1, \ldots, f_p \rangle\) has
    codimension \(p\), 
    \item at any point \(\by\) of the complex algebraic set  \(V = V\left( f_1, 
    \ldots, f_p \right)\subset \compfield^{n} \), the Jacobian matrix 
    \[
    \begin{bmatrix} 
        \frac{\partial f_1}{\partial x_1} & \cdots & \frac{\partial
            f_1}{\partial x_{n}} \\
        \vdots & \ddots & \vdots \\
        \frac{\partial f_p}{\partial x_1} & \cdots & \frac{\partial
            f_p}{\partial x_{n}} 
    \end{bmatrix} 
    \]
    has rank \(p\). Equivalently, \(V\) is smooth and equidimensional, \new{and
    \(\langle f_1, \ldots, f_p \rangle\) is radical}.  
\end{itemize}
These properties are satisfied generically. 
We call real algebraic sets defined by a polynomial sequence satisfying such a
genericity assumption smooth complete intersection real algebraic sets. 

The goal of this note is to refine the complexity analysis done in
\cite{elliottgiesbrechtschostgeneral} on an algorithm which is a
variant of \cite{safeyschostonepoint} for computing at least one point
in each connected component of a smooth complete intersection real
algebraic set, which is the real trace \(V\cap\realfield^{n}\) of an
algebraic set \(V\).  \new{These algorithms reduce the problem to the
  one of solving systems with finitely many complex solutions.}
  \new{This refinement is based on nowadays classical degree and
    height bounds which lie at the interplay of computer algebra and
    intersection theory, by leveraging some special properties of
  those systems with finitely many complex solutions.}

The algorithm \cite[Algorithm 1]{elliottgiesbrechtschostgeneral} starts by
performing a randomly chosen linear change of coordinates, which is expectedly
generic enough to let some geometric subsets of \(V\) satisfy Noether position
properties. In \cite{safeyschostonepoint, elliottgiesbrechtschostgeneral}, these
changes of variables are indicated with a matrix as an upper script. We omit
this notation in this note. 

These subsets of \(V\) are sets of critical points (also known as polar
varieties) of the restriction to \(V\cap\realfield^{n}\) of the canonical
projections \(\pi_i : \left( x_1, \ldots, x_{n} \right) \mapsto \left( x_1,
\ldots, x_{i} \right) \) for \(1\le i \le n-p =\new{\dim(V)}\). As
in~\cite{elliottgiesbrechtschostgeneral}, we denote them by \(\polar(i, \bff)\).
By convention, \(\polar(\new{\dim(V)}+1, \bff) = V\). Then, the algorithm from
\cite{safeyschostonepoint} computes a symbolic representation of the sets
\[
    \polar(i, \bff)\cap \pi_{i-1}^{-1}(\mathbf{0}) \quad \text{ for } \quad 
    2 \le i \le \new{\dim(V)}+1
\]
with \(\polar(1, \bff)\), all of them being finite thanks to the linear change
of coordinates. The fibers of the \(\pi_i\)'s are taken at the origin but can be
taken at any arbitrary point of their target spaces
(see \cite[Theorem 2]{safeyschostonepoint}). 

In \cite{safeyschostonepoint}, the sets \(\polar(i, \bff)\) are defined by the
vanishing of entries of \(\bff\) and of the maximal minors of the truncated 
Jacobian matrix
\[
\jac(\bff, i) = 
\begin{bmatrix} 
    \frac{\partial f_1}{\partial x_{i+1}} & \cdots & \frac{\partial
    f_1}{\partial x_{n}} \\
            \vdots & \ddots & \vdots \\
            \frac{\partial f_p}{\partial x_{i+1}} & \cdots & \frac{\partial
    f_p}{\partial x_{n}} 
\end{bmatrix} 
.\] 

Note that \(\polar(i, \bff)\) is 
equivalently defined as the projection on
the \(\left( x_1, \ldots, x_{n} \right) \)-space of the solutions to the 
Lagrange systems.
\[
    f_1 = \cdots = f_p = 0, \quad [\ell_1, \ldots, \ell_p]\jac(\bff, i) =
    \mathbf{0}, \quad u_1\ell_1 + \cdots + u_{p}\ell_{p} = 1,
\]
where \(\ell_1, \ldots, \ell_p\) are new indeterminates (the Lagrange
multipliers) and the \((u_1, \ldots, u_p)\in \ratfield\) are randomly chosen (to
satisfy a generic property).  

This is used in e.g.~\cite{safeyschostcomplexity}
for solving optimization problems, or in \cite{SaSc17} for answering
connectivity queries in real algebraic sets, as well as in  
\cite{elliottgiesbrechtschostgeneral} for computing sample points in
real algebraic sets.

Moreover, in~\cite{elliottgiesbrechtschostgeneral}, the fibers of the
projections \(\pi_i\) are taken over \emph{randomly} chosen points
\(\bsigma_i\in \ratfield^{i}\) in order to ensure regularity properties of their
intersection with \(\polar(i, \bff)\) (basically, the obtained polynomial system
generates a radical ideal). This yields~\cite[Algorithm
1]{elliottgiesbrechtschostgeneral}.

A few more comments on the structure of the above Lagrange systems are in order
here, since we use it to improve the complexity analysis
of~\cite{elliottgiesbrechtschostgeneral}. To each equation defining \(\polar(i,
\bff)\cap \pi_{i-1}^{-1}(\bsigma_{i-1})\) (up to elimination of the Lagrange
multipliers), one associates a \emph{bi-degree} \((d_1, d_2)\in \Nintegers\times
\Nintegers\) which is the couple of partial degrees in respectively the block of
variables \(x_1, \ldots, x_{n}\) and \(\ell_1, \ldots, \ell_p\). We denote by
\(d\) the maximum degree of the total degrees of the input polynomials in
\(\bff\). 
Hence, \(\polar(i, \bff)\cap \pi_{i-1}^{-1}(\bsigma_{i-1})\) is defined by 
\begin{itemize}
    \item \(p\) equations of bi-degree  \((d, 0)\) (the vanishing of the
        \(f_i\)'s),
    \item  \(n-i\) equations of bi-degree  \((d-1, 1)\) (coming from multiplying
        the row vector of Lagrange multipliers with \(\jac(\bff, i)\)), 
    \item  \(1\) equation of bi-degree  \((0, 1)\) (the affine linear form in
        \( \ell_1, \ldots, \ell_p\)), 
    \item \(i-1\) equations of bi-degree \((1, 0)\) (coming from
        \(\pi_{i-1}^{-1}(\bsigma_{i-1})\)). 
\end{itemize}
Further, we denote by \(\bd_i\) the sequence of the bi-degrees of these
polynomials and by  \(\bn\) the couple  \(n, p\). 

These modifications allow us to use the symbolic homotopy algorithm in
\cite[Algorithm 2]{safeyschostcomplexity} for computing ``regular'' solutions to 
zero-dimensional polynomial systems leveraging multi-affine structures
which are known to the end-user. For systems generating a radical ideal, all
solutions are computed and symbolically represented by a zero-dimensional 
\emph{rational parametrization}. We recall the definition of such a
data structure below.    

\medskip
    Let $S \subset \mathbb{C}^n$ be a finite algebraic set defined by
    polynomials with rational coefficients. A \emph{{zero-dimensional
    rational parametrization}} $((w, v_1, \dots, v_n), \lambda)$ of $S$ consists
    of: 
    \begin{itemize} 
        \item polynomials $w, v_1, \dots, v_n \in \mathbb{Q}[T]$
        for a new variable $T$, such that $w$ is monic and squarefree, and
        $\deg(v_i) < \deg(w)$ for all $i$, 
    \item a $\mathbb{Q}$-linear form
        $\lambda$ in $n$ variables such that $\lambda(v_1, \dots, v_n) = Tw'
        \mod w$, 
    \end{itemize} 
    such that the equality $S =
        \left\{\left(\frac{v_1(t)}{w'(t)}, \cdots, \frac{v_n(t)}{w'(t)}\right)
        \middle| \ w(t) = 0\right\}$ is satisfied.

\medskip
Such a data structure is a standard way to represent outputs of algorithms that
lie in $\overline{\mathbb{Q}}^n$, such as the algorithm we analyze. See
\cite{elliottgiesbrechtschostgeneral} and references therein for more details.

\medskip
The tour de force, performed in \cite{elliottgiesbrechtschostgeneral}, is to
carefully analyze and quantify probabilistic aspects of the algorithm outlined
above, i.e. control the degrees of the Zariski closed subsets where one should
\emph{not} pick the randomly chosen linear change of variables, the points
\(\bsigma_i\) used to define the fibers of the \(\pi_i\)'s, and the vectors
\(u_1, \ldots, u_p\) used to define the Lagrange systems. This allows us to
control the probability of success of random choices for these objects as well
as their bit sizes. These estimates are combined with the bit complexity
statement \cite[Theorem 1]{safeyschostcomplexity} to provide a bit complexity
analysis of \cite[Algorithm 1]{elliottgiesbrechtschostgeneral} with degree and
\emph{height} bounds on the output.  

In this note, we show how to leverage even more the bi-affine structure of the
Lagrange systems defining \(\polar(i, \bff)\) to obtain a better bit complexity
statement for the same algorithm \cite[Algorithm
1]{elliottgiesbrechtschostgeneral}, as well as degree and height bounds. Before
stating our main result, we recall the definition of height. 

\medskip
We use $\log(x)$ to mean the logarithm of \(x\) in base $2$.
    The \emph{{height}} of a non-zero $\frac{\new{a}}{\new{c}} \in \mathbb{Q}$,
    for \(\new{c}>0\) prime with \(\new{a}\),
    denoted $\htt\left(\frac{\new{a}}{\new{c}}\right)$, is the maximum of
    $\log|\new{a}|$ and $\log(\new{c})$. If $f$ is a non-zero polynomial with
    rational coefficients, define $v \in \mathbb{N}$ as the least common
    multiple of the denominators of all coefficients of $f$. The \emph{{height}}
    of $f$, denoted $\htt(f)$, is the maximum of $\log(v)$ and the heights of
    the coefficients of $vf$.

\medskip
We can now state our main results. Below, we use the \(\widetilde{O}\) notation
to omit poly-logarithmic factors. \new{For clarity, we split the result in two
statements, dealing respectively with the cases where $d>2$ and $d=2$.}

\begin{theorem}\label{thm:main}
Let $\bff \coloneqq (f_1, \dots, f_p)\subset \Zintegers[x_1, \ldots, x_{n}]$
with $\deg(f_i) \leq d$ and $\htt(f_i) \leq b$ for all $1 \le i\le p$. Suppose
that the ideal generated by $\bff$ is radical and that $V \coloneqq {V}(\bff)
\subset \mathbb{C}^n$ is smooth and $(n-p)$-equidimensional. Let $0 < \epsilon <
1$ be a real number, and suppose that $d > 2$.

There exists a randomized algorithm taking $\bff$ and $\epsilon$ as inputs and
outputs $n-p+1$ zero-dimensional rational parametrizations, whose union of
zeroes contains at least one point per connected component of $V \cap
\mathbb{R}^n$, with probability at least $1 - \epsilon$. Otherwise, the
algorithm either returns a proper subset of those points, or \emph{FAIL}. In any
case, the algorithm performs
\[\widetilde{O}\left(\log(1/\epsilon)(b + d\log(1/\epsilon))
\binom{n}{p}^2\binom{n+d}{d} d^{2p}(d-1)^{2(n-p)}\right)\] 
bit operations. The polynomials in the output have degree at most
$d^p(d-1)^{n-p}\binom{n-1}{p-1}$, and height in
\[\widetilde{O}\left((b + d\log(1/\epsilon)) d^p
(d-1)^{n-p}\binom{n}{p}\right).\]
\end{theorem}

\new{We now state our result when $d=2$.}

\begin{theorem}\label{thm:main2}
\new{Let $\bff \coloneqq (f_1, \dots, f_p)\subset \Zintegers[x_1, \ldots,
x_{n}]$ with $\deg(f_i) \leq 2$ and $\htt(f_i) \leq b$ for all $1 \le i\le p$.
Suppose that the ideal generated by $\bff$ is radical and that $V \coloneqq
{V}(\bff) \subset \mathbb{C}^n$ is smooth and $(n-p)$-equidimensional. Let $0 <
\epsilon < 1$ be a real number.}

\new{There exists a randomized algorithm taking $\bff$ and $\epsilon$ as inputs
and outputs $n-p+1$ zero-dimensional rational parametrizations, whose union of
zeroes contains at least one point per connected component of $V \cap
\mathbb{R}^n$, with probability at least $1 - \epsilon$. Otherwise, the
algorithm either returns a proper subset of those points, or \emph{FAIL}. In any
case, the algorithm performs
\[\widetilde{O}\left(n^6(n-p+1)\log(1/\epsilon)(b + \log(1/\epsilon) + n)
\binom{n}{p}^2 2^{2p}\right)\] 
bit operations. The polynomials in the output have degree at most
$2^p\binom{n-1}{p-1}$, and height in
\[\widetilde{O}\left(n^2(b + \log(1/\epsilon) + n) 2^p \binom{n}{p}\right).\]}
\end{theorem}

\new{Compared to other algorithms solving this problem, and using the same
notation for number of polynomials, variables, degrees and heights, our
complexity estimate is as follows:}

\begin{table}[H]
\new{
\centering
\begin{tabular}{|c|c|}
    \hline
    Algorithm & Complexity \\
    \hline
    \rule{0pt}{3ex} Cylindrical Algebraic Decomposition \cite{collins} &
    $b(pd)^{2^{O(n)}}$ (general, deterministic) \\
    \rule{0pt}{4ex} \cite[Algorithm 12.17]{basupollackroy} & $bd^{O(n)}$
    (general, deterministic)\\
    \rule{0pt}{4ex} \multirow{4}{*}{Polar varieties-based algorithms 
    \cite{safeyschostonepoint,BGHM97,BGHM01,BGHP05,BGHSS10}} & 
    $\widetilde{O}\big(\binom{n+d}{d}\binom{n-1}{p}^2(pd)^{3n+3}\big)$ \\
    & (regularity assumptions, \\
    & arithmetic, probabilistic) \\
    \rule{0pt}{4ex} \multirow{3}{*}{ \cite[Algorithm
    1]{elliottgiesbrechtschostgeneral}} &
    $\widetilde{O}\big((b+\log(1/\epsilon))\log(1/\epsilon)d^{3n+2p+1}\big)$ \\
    & (regularity assumptions, \\
    & probability of success $\geq 1-\epsilon$) \\
    \rule{0pt}{5ex} \multirow{4}{*}{This work} & $\widetilde{O}\Big(n^4
    (n-p+1)(b + d\log(1/\epsilon) + dn)$ \\
    & $\log(1/\epsilon)\binom{n}{p}^2\binom{n+d}{d}d^{2p}(d-1)^{2n-2p}\Big)$ \\
    & (regularity assumptions, \\
    & probability of success $\geq 1-\epsilon$) \\
    \hline
\end{tabular}
}
\end{table}

\begin{remark}
    The estimates in \emph{\cite{elliottgiesbrechtschostgeneral}} were
    exponential in \(p\). Precisely, the exponential behavior in the number of
    bit operations was in  \(d^{3n+2p+1}\) (instead of \(d^{2p}
    (d-1)^{2(n-p)}\)), leading to a significant overestimation in regimes where
    \(p\simeq \frac{n}{2}\).   
    Similar statements can be made on the degree and the height bounds on the
    output. 

    Note also that our complexity estimates are polynomial in \(n\) when \(d=2\)
    and \(p\) is fixed, which was not the case for the estimates in
    \emph{\cite{elliottgiesbrechtschostgeneral}}.
\end{remark}

\secnew{Algorithm}

\new{
We briefly recall in this section the variant of
\cite{safeyschostonepoint} given in \cite[Algorithm
1]{elliottgiesbrechtschostgeneral}
}

\begin{algorithm}[H]
\algoCaptionLabel{\cite[Algorithm 1]
{elliottgiesbrechtschostgeneral}}{}{main}
\begin{algorithmic}[1]
    \Require $\bff = (f_1, \dots, f_p) \in \Zintegers[x_1, \dots, x_n]^p$ and $0
    < \epsilon < 1$.
    \Assume $\deg(f_i) \leq d$, $\htt(f_i) \leq b$, and $d \geq 2$. 
    \Ensure $n-p+1$ zero-dimensional rational parametrizations, whose union of
    zeroes includes at least one point in each connected component of $V(\bff)
    \cap \realfield^n$, with probability at least $1-\epsilon$. 
    \State Construct
    \vspace{-0.5cm}
    \begin{align*}
        S &\coloneqq \{1,2,\dots, \lceil20\epsilon^{-1}n^3(2d)^{5n}\rceil\}, \\
        T &\coloneqq \{1,2,\dots, \lceil4\epsilon^{-1}nd^{4n}\rceil\}, \\
        R &\coloneqq \{1,2,\dots, \lceil4\epsilon^{-1}nd^{2n}\rceil\}, \\
    \end{align*}

    \vspace{-0.7cm}
    \noindent and choose $\bA \in S^{n^2}$, $\bsigma \in T^{n-p}$ and $\bu \in
    R^p$ uniformly at random;
    \For{$i \in \{1, \dots, n-p+1\}$}

        \State \parbox[t]{370pt}{%
        Build a straight-line program $\Gamma_i$ that computes the
        polynomials
        \[x_1 - \sigma_1, \dots, x_{i-1}-\sigma_{i-1},\ \bff^{\bA}, \ 
        \begin{bmatrix} L_1 \cdots L_p \end{bmatrix} \cdot \jac(\bff^{\bA}, i), 
        \ u_1L_1 + \cdots + u_pL_p - 1;\]\strut
        \vspace{-0.5cm}}
        \State Run \cite[Algorithm 2]{safeyschostcomplexity} $k \geq
        \log_2(4n/\epsilon)$ times with input $\Gamma_i$;\label{lin}
        \State \parbox[t]{313pt}{%
        Take $\scrD_i = ((w_i, v_{i,1}, \dots, v_{i,n+p}),\lambda_i)$ to
        be the highest cardinality zero-dimensional rational parametrization
        returned in Step 4;\strut}
        \State \parbox[t]{313pt}{%
        Denote $\scrD_i' = ((w_i, v_{i,1}, \dots, v_{i,n}),\lambda_i)$
        the parametrization of the projection of $\scrD_i$ onto the $x_1, \dots,
        x_n$-space;\strut}
    \EndFor
    \State \Return $[\scrD_1', \dots, \scrD_{n-p+1}']$.
\end{algorithmic}
\end{algorithm}
\new{Geometrically, it is rather simple; as explained before, it applies some
randomly chosen linear change of coordinates to satisfy generic properties,
computes the critical points of a projection on a coordinate, and then
instantiates said coordinate and repeats, as illustrated on an example below.}

\begin{figure}[H]
\begin{picture}(\textwidth,4.5cm)
\put(30,0){
    \includegraphics{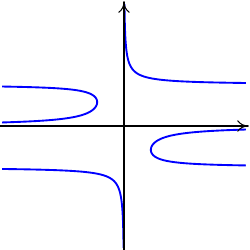}
}
\put(177.5,56.5){\tikz\draw[-{Latex[length=3mm]},line width=0.4mm](0,0) 
to (2,0);}
\put(260,0){
    \includegraphics{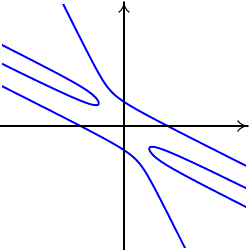}
}
\end{picture}
\caption{\new{Illustration of the change of variables step applied to $\bff = 
(4x_1(x_2^3-x_2)-1)$.}}
\label{fig:step1}
\end{figure}

\begin{figure}[H]
\begin{picture}(\textwidth,4.5cm)
\put(0,0){
    \includegraphics{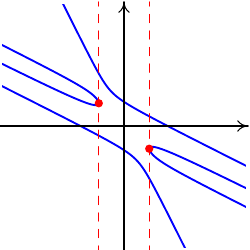}
}
\put(125,56.5){\tikz\draw[-{Latex[length=3mm]},line width=0.4mm](0,0) to
(0.75,0);}
\put(145,0){
    \includegraphics{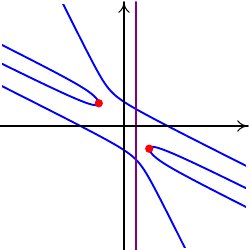}
}
\put(270,56.5){\tikz\draw[-{Latex[length=3mm]},line width=0.4mm](0,0) to
(0.75,0);}
\put(290,0){
    \includegraphics{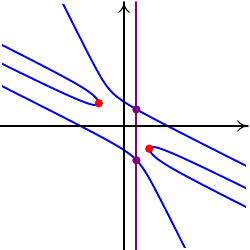}
}
\end{picture}
\caption{\new{Illustration of the computation of critical points and
instantiation step.}}
\label{fig:step2}
\end{figure}

\begin{figure}[H]
\begin{picture}(\textwidth,4.5cm)
\put(30,0){
    \includegraphics{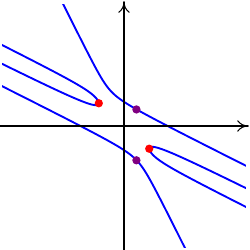}
}
\put(177.5,56.5){\tikz\draw[-{Latex[length=3mm]},line width=0.4mm](0,0) 
to (2,0);}
\put(260,0){
    \includegraphics{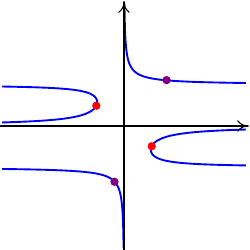}
}
\end{picture}
\caption{\new{Illustration of the change of variables reversal step.}}
\label{fig:step3}
\end{figure}

\new{The most costly step of this algorithm is \Cref{lin}, which computes the
parametrization of the critical points. All other steps are for picking randomly
chosen objects, or involve elementary manipulations of such objects.}

\section{Proof of \texorpdfstring{\cref{thm:main,thm:main2}}{}}

The bulk of the proof is to refine degree and height bounds on the set of
solutions to the aforementioned Lagrange systems, by leveraging their structure
which we emphasized above. All other ingredients (height bounds for the objects
which are randomly chosen and probability estimates) come from
\cite{elliottgiesbrechtschostgeneral}. 

\Cref{lin} calls \cite[Algorithm 2]{safeyschostcomplexity} $O\left(\log(n) +
\log(1/\epsilon)\right)$ times on the input system, for $1 \leq i \leq n - p +
1$:
\[\polarsys{i}{\bff^\bA, \bu} \coloneqq \left( x_1 - \sigma_1, \dots, x_{i-1} -
\sigma_{i-1}, \bff^{\bA}, [\ell_1 \cdots \ell_p]\jac(\bff^{\bA}, i), u_1 \ell_1
+ \dots + u_p \ell_p - 1 \right)\]
where:
\begin{itemize}
    \item the entries of $\bsigma = \left( \sigma_1, \ldots, \sigma_{i-1}
        \right) $ and $\bu = (u_1, \ldots, u_p)$ are randomly drawn integer
        numbers of height bounded by $\widetilde{O}\left(\log(1/\epsilon) +
        n\log(d)\right)$ (see \cite[Section
        9.2]{elliottgiesbrechtschostgeneral}),
    \item $\bff^\bA$ indicates the sequence defined by $\bff({\bA}\bx)$ for
        $\bA$ being a matrix with integer coefficients of height bounded by
        $\widetilde{O}\left(\log(1/\epsilon) + n\log(d)\right)$ and \(\bx\)
        being the vector of variables  \(x_1, \ldots, x_{n}\). 

        In particular, letting \(b\) be the height of the original system
        $\bff$, the height of $\bff^{\bA}$ is in $\widetilde{O}\left(b +
        d\log(1/\epsilon) + dn\right)$ (see \cite[Section
        9.2]{elliottgiesbrechtschostgeneral}). 
\end{itemize}
\noindent
\cite[Theorem 1]{safeyschostcomplexity} establishes bit complexity and output
size statements for \cite[Algorithm 2]{safeyschostcomplexity}. We provide below
a weakened version, dedicated to bi-affine polynomials (instead of the original
statement for multi-affine polynomials). 

\begin{theorem}\label{thm:compmultihomo}
\new{\emph{\cite[Theorem 1]{safeyschostcomplexity}}} Suppose that $F = (f_1,
\dots, f_N)$ is a sequence of multi-affine polynomials in $m$ groups of
variables $\mathbf{X}_1 = (X_{1}, \dots, X_{1,n})$ and  $\mathbf{Y} = (Y_{1},
\dots, Y_{m})$, such that $N = n + m$. Suppose that the height of each $f_i$ is
bounded by $s_i$ and the degree of each $f_i$ in $\mathbf{X}$ (resp.
\(\mathbf{Y}\)) is bounded by $d_{i}$ (resp. \(e_{i}\)). Suppose that $F$ is
given by means of a straight-line program $\Gamma$ of length $L$, that uses
integer coefficients of height at most $b$. \\ 
Then there exists an algorithm \emph{\cite[Algorithm
2]{safeyschostcomplexity}} taking $\Gamma, \bm{d} \coloneqq (d_{1},
e_{1}\dots, d_{N}, e_{N})$ and $\bm{s} \coloneqq (s_1, \dots, s_N)$ as
inputs and produces either a zero-dimensional rational parametrization of
the regular points of ${V}(F)$, with probability at least $21/32$, or a
parametrization of lesser degree, or \emph{FAIL}. In any case, the algorithm
performs 
    \[\widetilde{O}\left(Lb +
    \mathscr{C}_{\bm{n}}(\bm{d})\mathscr{H}_{\bm{n}}(\bm{\eta},
    \bm{d})(L+N\mathfrak{d}+N^2)N(\log(s)+N)\right)\] 
boolean operations, where
\[\mathfrak{d} \coloneqq \max\limits_{1 \leq i \leq N} d_{i} + e_{i}, 
    \ s \coloneqq \max\limits_{1 \leq i \leq N} s_i, \ \bm{\eta} =
\left(s_i + \log(n+1) d_i + \log(m + 1) e_i \right)_{1 \leq i \leq N},\]
$\mathscr{C}_{\bm{n}}(\bm{d})$ is the sum of the coefficients of the polynomial
\[\prod_{i=1}^N (d_{i}\theta_1 + e_{i}\theta_2) \mod
\left<\theta_1^{n+1}, \theta_2^{m+1}\right>,\] 
and $\mathscr{H}_{\bm{n}}(\bm{\eta}, \bm{d})$ is the sum of the coefficients of
the polynomial 
\[\prod_{i=1}^N (\eta_i \zeta + d_{i}\theta_1 + e_{i}\theta_2) \mod
\left<\zeta^2, \theta_1^{n+1}, \theta_2^{m+1}\right>.\] 
The output polynomials have degree at most $\mathscr{C}_{\bm{n}}(\bm{d})$ and
height $\widetilde{O}\left(\mathscr{H}_{\bm{n}}(\bm{\eta}, \bm{d}) +
N\mathscr{C}_{\bm{n}}(\bm{d})\right)$.
\end{theorem}

We apply these results to each system \(\polarsys{i}{\bff^\bA}, \bu\).
\new{Provided that $\bA$ and $\bsigma$ satisfy the genericity properties of 
\cite{elliottgiesbrechtschostgeneral}, these systems have finitely many
solutions, which are all regular, and hence \Cref{thm:compmultihomo} applies.}
Hence, we need to determine the quantities \(\mathscr{C}_{\bn}\left( \bd_i
\right) \) and $\mathscr{H}_{\bm{n}}(\bmeta_i, \bd_i)$ which are attached to
them according to their bi-affine structure that we described above. We obtain
the following estimates. 

\begin{lemma}\label{lem:Cnd}
    Reusing the notations introduced above, and for a matrix \(\bA\in
    \GL_n(\Zintegers)\) with entries of height in \(O\left(
    \log({1 / \epsilon}) + n \log(d) \right) \), 
    the following estimates hold for 
    the system $\polarsys{i}{\bff^\bA, \bu}$ (for $1 \leq i \leq
    n-p+1$):
    \begin{multline*}
        \mathscr{C}_{\bm{n}}(\bm{d}_i) = \binom{n-i}{p-1}d^p(d-1)^{n-p-i+1} \\
        \text{ and } 
    \mathscr{H}_{\bm{n}}(\bm{\eta}_i, \bm{d}_i) \in \widetilde{O}\left(n(b +
d\log(1/\epsilon) + dn)\binom{n}{p}d^{p}(d-1)^{n-p}\right). 
\end{multline*}
\end{lemma}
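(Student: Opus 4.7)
The plan is to substitute the bi-degree data of $\polarsys{i}{\bff^\bA, \bu}$ into the generating polynomials appearing in the definitions of $\mathscr{C}_{\bm{n}}$ and $\mathscr{H}_{\bm{n}}$, and extract the surviving monomials after reduction modulo the ideals $\langle \theta_1^{n+1}, \theta_2^{p+1}\rangle$ and $\langle \zeta^2, \theta_1^{n+1}, \theta_2^{p+1}\rangle$. Recall that the system has $p$ polynomials of bi-degree $(d,0)$, $n-i$ of bi-degree $(d-1,1)$, one of bi-degree $(0,1)$, and $i-1$ of bi-degree $(1,0)$, for a total of $N=n+p$ polynomials, and that here the $\theta_2$-block corresponds to the $p$ Lagrange multipliers $\ell_1,\dots,\ell_p$.

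For the first bound, I would write
\[
\prod_{j=1}^N (d_j\theta_1 + e_j\theta_2) \;=\; d^p\,\theta_1^{p+i-1}\,\theta_2\cdot \bigl((d-1)\theta_1+\theta_2\bigr)^{n-i}
\]
and expand the right-most factor by the binomial theorem. After reducing modulo $\langle \theta_1^{n+1},\theta_2^{p+1}\rangle$, the constraints $p+i-1+k\le n$ and $n-i-k+1\le p$ force the unique surviving index $k=n-p-i+1$, so a single monomial remains and its coefficient $d^p(d-1)^{n-p-i+1}\binom{n-i}{p-1}$ is exactly $\mathscr{C}_{\bm{n}}(\bm{d}_i)$.

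For the height quantity I would exploit $\zeta^2=0$ to write
\[
\prod_{j=1}^N(\eta_j\zeta+d_j\theta_1+e_j\theta_2) = \prod_{j=1}^N(d_j\theta_1+e_j\theta_2) + \zeta\sum_{j=1}^N \eta_j\prod_{k\neq j}(d_k\theta_1+e_k\theta_2),
\]
so that $\mathscr{H}_{\bm{n}}(\bm{\eta}_i,\bm{d}_i) = \mathscr{C}_{\bm{n}}(\bm{d}_i) + \sum_j \eta_j\,\mathscr{C}^{(j)}$ where $\mathscr{C}^{(j)}$ is the sum of coefficients of $\prod_{k\neq j}(d_k\theta_1+e_k\theta_2)$ modulo the ideal. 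Each $\eta_j$ is then bounded uniformly by $\widetilde{O}(b+d\log(1/\epsilon)+dn)$, using that the heights of $\bff^\bA$, of the entries of $\bu$, and of the coordinates of $\bsigma$ are all in that class from the estimates recalled from \cite{elliottgiesbrechtschostgeneral}, with the $\log(n+1)d_j+\log(p+1)e_j$ contribution being absorbed into the $\widetilde{O}$.

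The bulk of the remaining work is a short case analysis of $\mathscr{C}^{(j)}$ according to which of the four types of factors is removed. In each case the product still has the form $d^{a}\theta_1^{\alpha}\theta_2^{\beta}((d-1)\theta_1+\theta_2)^{\gamma}$ with $a+\alpha+\beta+\gamma = N-1 = n+p-1$, and at most two binomial terms survive the reduction, each bounded by $d^p(d-1)^{n-p}\binom{n}{p}$ once one uses $\binom{n-i}{p-1},\binom{n-i}{p-2},\binom{n-i-1}{p-1},\binom{n-i-1}{p-2}\le \binom{n}{p}$ and $(d-1)^{n-p-i+1}\le (d-1)^{n-p}$. Summing over the $N=n+p$ indices $j$ produces an extra factor of order $n$, giving $\sum_j \mathscr{C}^{(j)}\in O(n\binom{n}{p}d^p(d-1)^{n-p})$, and combining with the bound on $\eta_j$ yields the claimed estimate on $\mathscr{H}_{\bm{n}}(\bm{\eta}_i,\bm{d}_i)$. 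The main obstacle is the bookkeeping in this case analysis: one must verify that all four types of removed factor lead to the same $\binom{n}{p}d^p(d-1)^{n-p}$ envelope without a spurious $(d-1)$ or binomial blowup.
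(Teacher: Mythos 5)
Your computation of $\mathscr{C}_{\bm{n}}(\bm{d}_i)$ is exactly the paper's: the same factored product, the same reduction forcing the single surviving index $k=n-p-i+1$. For $\mathscr{H}_{\bm{n}}(\bm{\eta}_i,\bm{d}_i)$, your decomposition $\mathscr{H}=\mathscr{C}+\sum_j\eta_j\mathscr{C}^{(j)}$ (using $\zeta^2=0$) together with the four-case analysis of which factor is removed is an organized version of the paper's explicit expansion, which is carried out there only for $i=1$ with the general $i$ declared routine; handling general $i$ directly, and observing that at most two monomials survive in each $\mathscr{C}^{(j)}$, is if anything cleaner. The treatment of the $\eta_j$'s is also the same as the paper's.

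There is, however, one step that fails as stated: the inequalities $\binom{n-i}{p-2}\le\binom{n}{p}$ and $\binom{n-i-1}{p-2}\le\binom{n}{p}$ are false when $p$ is close to $n$. For instance, with $p=n$ (so $i=1$), $\binom{n-1}{p-2}=n-1$ while $\binom{n}{p}=1$; concretely, removing one of the $p$ factors $d\theta_1$ leaves $d^{n-1}\theta_1^{n-1}\theta_2\bigl((d-1)\theta_1+\theta_2\bigr)^{n-1}$, whose surviving coefficient sum is $d^{n-1}\bigl(1+(n-1)(d-1)\bigr)\approx (n-1)d^{n}$, so your per-term envelope $d^p(d-1)^{n-p}\binom{n}{p}=d^n$ is exceeded by a factor of order $n$. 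Consequently the intermediate claim $\sum_j\mathscr{C}^{(j)}\in O\bigl(n\binom{n}{p}d^p(d-1)^{n-p}\bigr)$ is wrong in this regime: the $p$ removals of type $(d,0)$ contribute on the order of $p\binom{n}{p-1}d^p(d-1)^{n-p}=\tfrac{p^2}{n-p+1}\binom{n}{p}d^p(d-1)^{n-p}$, which can reach $n^2\binom{n}{p}d^p(d-1)^{n-p}$. The stated lemma nevertheless survives, because the excess factor is at most $p$ (use $\binom{n-i}{p-2}\le\binom{n}{p-1}=\tfrac{p}{n-p+1}\binom{n}{p}\le p\binom{n}{p}$ by Pascal's rule), and $p\le\log_2(d^p)$ for $d\ge 2$, so this factor is absorbed by the $\widetilde{O}$ in front of $d^p(d-1)^{n-p}$; this is precisely what the paper does, keeping a bound of the form $p(n+2)(b+d\log(1/\epsilon)+dn)\binom{n}{p}d^p(d-1)^{n-p}$ and only then discarding the extra $p$ inside $\widetilde{O}$. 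So replace the false binomial bound by the $\binom{n}{p-1}\le p\binom{n}{p}$ estimate and add the one-line absorption argument; without it, your case (b) bookkeeping does not close.
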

\begin{proof}
    We apply \Cref{thm:compmultihomo} to \(\polarsys{i}{\bff^\bA,
    \bu}\) using the bi-affine structure emphasized above in the introduction.
    Following its definition, $\mathscr{C}_{\bm{n}}(\bm{d}_i)$, is the sum of
    the coefficients of
    \begin{align*}
        &\phantom{={}} \prod_{j=1}^{n+p} (d_{j,1}\theta_1 + d_{j,2}\theta_2) 
        &&\mod \left<\theta_1^{n+1}, \theta_2^{p+1}\right> \\
        &= \prod_{j=1}^{i-1}(1\theta_1 + 0\theta_2)\prod_{j=1}^{p}(d\theta_1 + 
        0\theta_2)\prod_{j=1}^{n-i}((d-1)\theta_1 + 1\theta_2)\times(0\theta_1 +
        1\theta_2) &&\mod \left<\theta_1^{n+1}, \theta_2^{p+1}\right> \\
        &= d^p\theta_1^{p+i-1}\theta_2\prod_{j=1}^{n-i}((d-1)\theta_1 + 
        \theta_2) &&\mod \left<\theta_1^{n+1}, \theta_2^{p+1}\right> \\
        &= d^p\theta_1^{p+i-1}\theta_2 \binom{n-i}{p-1}(d-1)^{n-i-p+1}\theta_1
        ^{n-i-p+1}\theta_2^{p-1} &&\mod \left<\theta_1^{n+1}, \theta_2^{p+1}
        \right> \\
        &= \binom{n-i}{p-1}d^p(d-1)^{n-i-p+1}\theta_1^n\theta_2^p &&\mod 
        \left<\theta_1^{n+1}, \theta_2^{p+1}\right>.
    \end{align*}
    We deduce that 
    \[\mathscr{C}_{\bm{n}}(\bm{d}_i) = \binom{n-i}{p-1}d^p(d-1)^{n-i-p+1}.\]
    Some routine computations show that for all \(2\le i\le n-p+1\), it holds
    that \(\mathscr{C}_{\bn}(\bd_1) \mathnew{\ge} \mathscr{C}_{\bn}(\bd_i)\).
    Following the definition of $\bm{\eta}$, we have for $i=1$:
    \begin{itemize}
        \item $\eta_1 = \cdots = \eta_p$ lie in 
            $\widetilde{O}\left(b + d\log(1/\epsilon) + dn + d\log(n+1)\right)
            \subset  
            \widetilde{O}\left(b + d\log(1/\epsilon) + dn\right)$,
        \item $\eta_{p+1} = \cdots = \eta_{n+p-1} $ lie in 
            $ \widetilde{O}\left(b +
            d\log(1/\epsilon) + dn + (d-1)\log(n+1) + \log(p+1)\right) $ which
            is in 
            $\widetilde{O}\left(b + d\log(1/\epsilon) + dn\right)$,
        \item $\eta_{n+p} \in 
            \widetilde{O}\left(\log(1/\epsilon) + n\log(d) + \log(p+1)\right) 
            \subset 
            \widetilde{O}\left(\log(1/\epsilon) + n\log(d)\right)$.
    \end{itemize}
    By definition, $\mathscr{H}_{\bm{n}}(\bm{\eta}_1, \bm{d}_1)$ is the sum of
    the coefficients of
    \begin{align*}
        &\prod_{j=1}^{n+p} (\eta_j \zeta + d_{j,1}\theta_1 + d_{j,2}\theta_2) 
        \mod \left<\zeta^2, \theta_1^{n+1}, \theta_2^{p+1}\right>\\
        = &\prod_{j=1}^{p}\left((b + d\log(1/\epsilon) + dn)\zeta + d\theta_1
        \right)\prod_{j=1}^{n-1}((b + d\log(1/\epsilon) + dn)\zeta + (d-1)
        \theta_1 + \theta_2) \\
        &\indent \times \left((\log(1/\epsilon) + n\log(d))\zeta + \theta_2
        \right) \mod \left<\zeta^2, \theta_1^{n+1}, \theta_2^{p+1}\right>.
    \end{align*}
    Expanding the polynomial yields the following sum of coefficients:
    \begin{align*}
        \mathscr{H}_{\bm{n}}(\bm{\eta}_1, &\bm{d}_1) = \binom{n-1}{p-1}d^p
        (d-1)^{n-p} \\
        &\indent + p(b + d\log(1/\epsilon) + dn)d^{p-1}(d-1)^{n-p}
        \left(\binom{n-1}{p-1} + \binom{n-1}{p-2}(d-1)\right) \\
        &\indent + \mathnew{(n-1)}(b + d\log(1/\epsilon) + dn)d^{p}(d-1)^{n-p-1}
        \left(\binom{n-2}{p-1} + \binom{n-2}{p-2}(d-1)\right) \\
        &\indent + (\log(1/\epsilon) + n\log(d))d^{p}(d-1)^{n-p-1}
        \left(\binom{n-1}{p} + \binom{n-1}{p-1}(d-1)\right) \\
        &\leq d^{p}(d-1)^{n-p}\biggl(p(b + d\log(1/\epsilon) + dn)
        \binom{n}{p-1} \\
        &\indent + \mathnew{n}(b + d\log(1/\epsilon) + dn)\binom{n-1}{p-1} + 
        (\log(1/\epsilon) + n\log(d))\binom{n}{p}\biggr) \\
        &\leq d^{p}(d-1)^{n-p}\mathnew{n}(b + d\log(1/\epsilon) + dn)
        (n+2)\binom{n}{p},
    \end{align*}
    where we have used the sum of consecutive binomials formula, and the fact
    that $\binom{n }{p-1} + \binom{n-1}{p-1} + \binom{n}{p} \leq
    (n+2)\binom{n}{p}$. We therefore obtain 
    \[\mathscr{H}_{\bm{n}}(\bm{\eta}_1, \bm{d}_1) \in
    \widetilde{O}\left(n\mathnew{^2}(b + d\log(1/\epsilon) + dn)\binom{n}{p}
    d^{p}(d-1)^{n-p}\right).\]
    \new{Moreover, incrementing $i$ changes a polynomial with associated values
    $\bd_i = (d-1,1)$ and $\eta_i \in \widetilde{O}\left(b + d\log(1/\epsilon) +
    dn\right)$ to one with values $(1,0)$ and $\widetilde{O}(n\log(d) +
    \log(1/\epsilon))$, which are all strictly smaller. Hence, we deduce that}
    \(\mathscr{H}_{\bm{n}}(\bm{\eta}_i, \bm{d}_i) \le  
    \mathscr{H}_{\bm{n}}(\bm{\eta}_1, \bm{d}_1)\) for \(2 \le i\le n-p+1\). 
\end{proof}

The next two statements are routine and already stated in a number of papers
(including \cite{safeyschostcomplexity, elliottgiesbrechtschostgeneral}). We
give their proofs for completeness.
\begin{lemma}\label{lem:SLPonepoly}
    Let $f \in \Zintegers[x_1, \dots, x_n]$ be a polynomial of degree $d$ of
    height \(b\). Then there exists a straight-line program of length $O\left(
    \binom{n+d}{d}\right)$, with all rational constants used therein of length
    at most \(b\), which evaluates $f$.
\end{lemma}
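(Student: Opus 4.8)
The plan is to construct the straight-line program in two phases: first compute every monomial of degree at most $d$ in $x_1, \dots, x_n$, and then assemble $f$ as a $\ratfield$-linear combination of these monomials.

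For the first phase, I would use that the number of monomials $x^\alpha = x_1^{\alpha_1} \cdots x_n^{\alpha_n}$ with $|\alpha| \coloneqq \alpha_1 + \cdots + \alpha_n \leq d$ is exactly $\binom{n+d}{d}$. Enumerate these monomials by nondecreasing total degree, with ties broken arbitrarily. The degree-$0$ monomial $1$ and the degree-$1$ monomials $x_1, \dots, x_n$ are available for free (as inputs, together with the constant $1$, whose length is at most $b$). For any $\alpha$ with $|\alpha| \geq 2$, pick an index $i$ with $\alpha_i \geq 1$ and produce $x^\alpha$ by the single instruction $x^\alpha \leftarrow x_i \cdot x^{\alpha - e_i}$, whose two operands occur earlier in the enumeration and have therefore already been computed. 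This phase thus uses at most $\binom{n+d}{d}$ multiplications and no constants beyond $1$.

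For the second phase, write $f = \sum_{|\alpha| \leq d} c_\alpha\, x^\alpha$ with $c_\alpha \in \ratfield$, and let $v \in \Nintegers$ be the least common multiple of the denominators of the $c_\alpha$, as in the definition of height, so that $\log v \leq b$ and $\htt(v c_\alpha) \leq b$ for every $\alpha$. Writing $c_\alpha = p/q$ in lowest terms with $q > 0$, we have $|p| \leq |v c_\alpha|$ and $q \leq v$, hence $\htt(c_\alpha) = \max(\log|p|, \log q) \leq b$. Therefore $f$ can be accumulated using at most $\binom{n+d}{d}$ scalar multiplications $c_\alpha \cdot x^\alpha$, each involving a constant of length at most $b$, followed by at most $\binom{n+d}{d} - 1$ additions. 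Concatenating the two phases yields a straight-line program of length at most $3\binom{n+d}{d} + n = O\!\left(\binom{n+d}{d}\right)$, using $n \leq \binom{n+d}{d}$, all of whose rational constants have length at most $b$.

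I do not expect a genuine obstacle here; the two points deserving a line of care are that the ``monomial ladder'' spends only one multiplication per monomial, so that the total is $O\!\left(\binom{n+d}{d}\right)$ and not $O\!\left(d\binom{n+d}{d}\right)$, which works precisely because we compute every monomial of degree at most $d$ rather than only those occurring in $f$; and the elementary bookkeeping above showing that reading off the coefficients $c_\alpha$ from the height-$b$ data of $f$ does not create constants of length exceeding $b$.
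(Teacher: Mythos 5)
Your proof is correct, but it takes a genuinely different route from the paper. You build the straight-line program in two dense phases: a ``monomial ladder'' that produces each of the $\binom{n+d}{d}$ monomials of degree at most $d$ with a single multiplication from a lower-degree monomial, followed by one scalar multiplication per coefficient and the accumulating additions; your bookkeeping that each coefficient $c_\alpha$ (LCM of denominators, lowest-terms numerator and denominator) has height at most $b$ is also correct, and in the paper's setting $f$ even has integer coefficients, so that step is immediate. The paper instead argues by induction on the number of variables: writing $f = p_0 x_n^d + \cdots + p_d$ with $p_i \in \ratfield[x_1,\dots,x_{n-1}]$ of degree at most $i$, it evaluates the $p_i$ recursively in length $\sum_{i=0}^d O\bigl(\binom{n+i-1}{i}\bigr) = O\bigl(\binom{n+d}{d}\bigr)$ and then applies univariate Horner in $x_n$ at an extra cost of $2d$ gates. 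Both constructions are dense evaluations of the same asymptotic length with constants of height at most $b$; yours is non-inductive and yields an explicit bound of roughly $3\binom{n+d}{d}$ gates, while the paper's Horner recursion is the classical scheme and is marginally leaner since it never forms the monomials explicitly, only the coefficient data. Either argument suffices for the way the lemma is used later (in \cref{lem:SLP} and the proof of \cref{thm:main}).
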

\begin{proof}
    We prove this by induction on $n$. For the base step, suppose that $f$ is a
    univariate polynomial of degree $d$. Then, by Horner's Method, we can
    evaluate $f$ from the value for $x_1$ and its coefficients in the monomial
    basis using $d$ multiplications and $d$ additions. In particular, there
    exists a straight-line program of length $2d \in O\left(\binom{d+1}{d}
    \right)$ evaluating $f$. 

    Suppose now that the statement holds in \(\ratfield[x_1, \ldots, x_{n-1}]\).
    Let $f\in \ratfield[x_1, \ldots, x_{n}]$ be of degree $d$. We can express
    $f$ as in \(\ratfield[x_1, \ldots, x_{n-1}][x_{n}]\):
    $$f = p_0x_n^d + p_1x_n^{d-1} + \cdots + p_{d-1}x_n + p_d,$$ where each
    $p_i\in \ratfield[x_1, \ldots, x_{n-1}]$ has degree at most $i$ or is \(0\).
    By the inductive hypothesis, they can each be computed by a straight-line
    program of length $O\left(\binom{n+i-1}{i}\right)$, and hence there exists a
    straight-line program of length $\sum_{i=0}^dO\left(\binom{n+i-1}{i}\right) = 
    O\left(\binom{n+d}{d}\right)$ computing all $p_i$'s. Once the $p_i$'s are
    obtained, it suffices to construct $f$ as if it was univariate in $x_n$,
    which requires an additional $2d$ steps. 

    Therefore, there exists a straight-line program of length
    $O\left(\binom{n+d}{d} + 2d\right)$ computing $f$, which is in
    $O\left(\binom{n+d}{d}\right)$ which completes the induction.
\end{proof}

\begin{lemma}\label{lem:SLP}
    Let $\bff = \left( f_1, \ldots, f_p \right)\subset \Zintegers[x_1, \ldots,
    x_n] $, \(\bA = \left( a_{i,j} \right) \in \GL_n(\Zintegers)\) and \(\bu =
    (u_1, \ldots, u_p) \subset \Zintegers^p\) with \(\deg(f_i) \le d\),
    \(\htt(f_i) \le  b\) and \(\htt(u_i) \le  b\) for \(1 \le i \le p\), and
    \(\htt(a_{i,j}) \le b\) for \(1 \le i,j \le n\).
    There exists a straight-line program $\Gamma$ of length $O\left(p\binom{n+d}
    {d} + \mathnew{n^2}\right)$ using integer constants of height in \(O(b)\)
    which evaluates $\polarsys{1}{\bff^\bA, \bu}$.
\end{lemma}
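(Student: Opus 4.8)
The plan is to build the straight-line program for $\polarsys{1}{\bff^\bA, \bu}$ piece by piece, following the block structure of the system, and to add up the lengths and track the heights of the constants used. Recall that for $i=1$ the system is
\[
\polarsys{1}{\bff^\bA, \bu} = \left(\bff^{\bA}, [\ell_1\cdots\ell_p]\jac(\bff^\bA,1), u_1\ell_1 + \cdots + u_p\ell_p - 1\right),
\]
so there are no fiber equations $x_j - \sigma_j$ to handle. The first step is to produce the new variables $y_j \coloneqq (\bA\bx)_j = \sum_{k=1}^n a_{j,k}x_k$ for $1\le j\le n$; each is a linear form evaluable by $O(n)$ operations with integer constants $a_{j,k}$ of height $\le b$, for a total of $O(n^2)$ steps. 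Then I would invoke \cref{lem:SLPonepoly} on each $f_\iota$ (in the $y$-variables): this gives, for each $\iota$, an SLP of length $O\!\left(\binom{n+d}{d}\right)$ with rational constants of height $\le b$ evaluating $f_\iota^\bA$, hence $O\!\left(p\binom{n+d}{d}\right)$ steps in total for the block $\bff^\bA$, with height-$O(b)$ constants.

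Next I would handle the partial derivatives. Rather than differentiating symbolically and re-deriving SLP bounds from scratch, the clean route is to note that $\partial f_\iota^\bA/\partial x_m = \sum_{j=1}^n a_{j,m}\,(\partial f_\iota/\partial y_j)(\bA\bx)$, and that $f_\iota$ itself has degree $\le d$, so each $\partial f_\iota/\partial y_j$ is a polynomial of degree $\le d-1$ in $n$ variables with height $\le b + \log d \in O(b)$; by \cref{lem:SLPonepoly} each is evaluable with $O\!\left(\binom{n+d-1}{d-1}\right) \subseteq O\!\left(\binom{n+d}{d}\right)$ steps. One can in fact evaluate all $p$ polynomials $f_\iota$ together with all $pn$ of their first partials $\partial f_\iota/\partial x_m$ (these are the entries of $\jac(\bff^\bA,1)$, which is the full $p\times n$ Jacobian since $i=1$) within $O\!\left(p\binom{n+d}{d} + pn^2\right)$ steps: the $pn$ entries each cost $O(\binom{n+d}{d})$ from the derivative evaluations plus $O(n)$ to form the linear combination over $m$, but amortizing the shared subexpressions the dominant term is $O(p\binom{n+d}{d})$. (Alternatively, and more crudely, one may simply compute all maximal minors only implicitly and instead keep the matrix entries; the Lagrange formulation does not require minors.) The row-vector product $[\ell_1\cdots\ell_p]\jac(\bff^\bA,1)$ produces $n$ entries, each a sum of $p$ products $\ell_\iota \cdot (\partial f_\iota^\bA/\partial x_m)$, costing $O(pn)$ additional steps once the Jacobian entries are available. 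Finally $u_1\ell_1 + \cdots + u_p\ell_p - 1$ costs $O(p)$ steps with constants $u_\iota$ of height $\le b$ and the constant $1$.

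Adding these up, the total SLP length is $O(n^2) + O\!\left(p\binom{n+d}{d}\right) + O(pn) = O\!\left(p\binom{n+d}{d} + pn\right)$, absorbing $n^2$ since $\binom{n+d}{d}\ge \binom{n+2}{2}\ge n^2/2$ for $d\ge 2$ (and the $d=1$ case is degenerate), and all constants appearing are integers of height $O(b)$ (the $a_{j,k}$, the $u_\iota$, the rescaled integer coefficients of the $f_\iota$ from the height definition, and small absolute constants). I expect the only mildly delicate point to be the bookkeeping for the Jacobian block: one must make sure that evaluating the $pn$ partial-derivative entries does not cost $pn\cdot\binom{n+d}{d}$ but rather $p\cdot\binom{n+d}{d}$ (up to the lower-order $pn$ term), which follows because for each fixed $\iota$ the $n$ partials $\partial f_\iota^\bA/\partial x_m$ are obtained from the $n$ polynomials $\partial f_\iota/\partial y_j$ — themselves jointly computable within $O(\binom{n+d}{d})$ steps by a Horner-type argument as in \cref{lem:SLPonepoly} — by the fixed linear substitution $\bA$, costing only $O(n^2)$ extra per $\iota$ but this is dominated after summing since $p n^2 \in O(pn\binom{n+d}{d}/n) $; a small amount of care (or a slightly more generous $O(p\binom{n+d}{d}+pn^2)$ intermediate bound, still inside the claimed $O(p\binom{n+d}{d}+pn)$ when $d\ge 2$ since $n^2 \le \binom{n+d}{d}$) settles it.
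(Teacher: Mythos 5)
Your overall decomposition (change of variables, evaluate $\bff^\bA$, evaluate the Jacobian entries, then the cheap bilinear products in the $\ell_j$'s and the $\bu$-equation) matches the paper's, but there is a genuine gap exactly at the step you yourself flag as delicate. You assert that, for each fixed $\iota$, the $n$ partial derivatives $\partial f_\iota/\partial y_j$ are \emph{jointly} computable within $O\bigl(\binom{n+d}{d}\bigr)$ operations ``by a Horner-type argument as in \cref{lem:SLPonepoly}''. That claim is not proved, and the natural Horner-type recursion does not give it: writing $f=\sum_i p_i x_n^{\,i}$ and recursing, the per-node recombination cost for the $n$ partials is proportional to (number of remaining variables)$\times$(degree), and summing over the recursion tree yields a total of order $\min(n,d)\binom{n+d}{d}$, not $\binom{n+d}{d}$ --- consistent with the fact that the partials together have about $\min(n,d)\binom{n+d}{d}$ monomial occurrences. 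Eliminating this factor at only a constant-factor overhead over evaluating $f$ is precisely the content of the Baur--Strassen theorem (reverse-mode differentiation), which is what the paper invokes: it applies \cref{lem:SLPonepoly} to get an SLP of length $L\in O\bigl(p\binom{n+d}{d}+n^2\bigr)$ for $\bff^\bA$, then gets $(\bff^\bA,\jac(\bff^\bA))$ with an SLP of length $\le 3L$ using only the constants already present (hence height $O(b)$), and finishes with the $O(pn)$ bilinear gates. With your route as written you only obtain $O\bigl(p\min(n,d)\binom{n+d}{d}\bigr)$, which does not establish the stated bound $O\bigl(p\binom{n+d}{d}+pn\bigr)$; the fix is to cite Baur--Strassen (or reprove reverse mode) rather than appeal to \cref{lem:SLPonepoly}. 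Note also that explicit differentiation introduces coefficients of height up to $b+\log d$, whereas the Baur--Strassen construction keeps the constants of the original program, which is how the paper gets height $O(b)$ cleanly.

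Two smaller points: $\jac(\bff^\bA,1)$ is, by the paper's convention, the Jacobian truncated of its first column (a $p\times(n-1)$ matrix), not the full $p\times n$ Jacobian as you state --- harmless for an upper bound, but worth correcting; and your absorption of the $O(n^2)$ substitution cost via $n^2\le 2\binom{n+d}{d}$ for $d\ge 2$ is fine and matches what the paper does implicitly.
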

\begin{proof}
    Recall that
    \[\polarsys{1}{\bff^\bA, \bu} \coloneqq \left(\bff^\bA, [\ell_1 \cdots
    \ell_p]\jac(\bff^\bA, 1), u_1\ell_1 + \cdots + u_p\ell_p - 1\right).\]
    We apply \Cref{lem:SLPonepoly} to deduce the existence of a straight-line
    program which evaluates $\bff$, of length $O\left(p\binom{n+d}{d}\right)$,
    with integer constants of height \(\le b\). Applying the linear change of
    variables $\bA$ to $\bx$ beforehand requires at most an additional
    $O\left(n^2\right)$ steps. Therefore, there exists a straight-line program
    of length $L \in O\left(p\binom{n+d}{d} + n^2\right)$ which evaluates
    $\bff^\bA$. Again, its integer constants have height \(\le b\).

    By the Baur-Strassen theorem \cite[Theorem 1]{baurstrassen}, \new{applied
    component-wise}, this implies that there exists a straight-line program
    \(\Gamma'\) of length $\leq 3L$ computing $\left(\bff^\mathbf{A},
    \jac(\bff^\mathbf{A})\right)$. The construction of this new straight-line
    program (see e.g. the one of \cite[pages 78--81]{CKW11}) only adds a few
    gates to the program evaluating \(\bff^\bA\), all of them using the integer
    constants of that program. Hence, all integer constants in \(\Gamma'\) have
    height in  \(O(b)\). 

    Multiplying the Jacobian truncated of its first column by $[\ell_1 \cdots
    \ell_p]$ requires $p(n-1)$ multiplications and $(p-1)(n-1)$ additions, and
    computing the last polynomial requires $p$ multiplications and $p$
    additions. Since the entries of \(\bu\) have also a height  \(\le  b\), we
    deduce that this straight-line program uses integer constants lying in
    \(O(b)\). Combining these results, there exists a straight-line program
    $\Gamma_1$ computing the system $\polarsys{1}{\bff^\bA, \bu}$, with length
    \[3L+p(n-1)+(p-1)(n-1) + p + p \in O\left(p\binom{n+d}{d} + \mathnew{n^2}
    \right),\]
    as required.
\end{proof}

We can now prove \cref{thm:main,thm:main2} using these estimates. 

\begin{proof}[Proof of \cref{thm:main,thm:main2}]
    We start by estimating the cost of calling \cite[Algorithm
    2]{safeyschostcomplexity} with $\polarsys{1}{\bff^\bA, \bu}$ as input (for
    \(\bA\) and \(\bu\) chosen as in \cite[Algorithm
    1]{elliottgiesbrechtschostgeneral}) with heights we recalled above. We then
    apply \Cref{thm:compmultihomo} with the estimates of
    \cref{lem:Cnd} and \cref{lem:SLP}.
    
    Hence, we have:
    \begin{itemize}
        \item $\mathscr{C}_{\bm{n}}(\bm{d}) = \binom{n-1}{p-1}d^p(d-1)^{n-p}$, 
        \item $\mathscr{H}_{\bm{n}}(\bm{\eta}, \bm{d}) \in \widetilde{O}\left(
        n\mathnew{^2}(b + d\log(1/\epsilon) + dn)\binom{n}{p}d^{p}
        (d-1)^{n-p}\right)$, 
        \item $\mathfrak{d} = \max_{1 \leq j \leq n+p} d_{j,1} + d_{j,2} = d$,
        \item $s = \max_{1 \leq j \leq n+p} s_j \in \widetilde{O}\left(b + d
        \log(1/\epsilon) + dn\right)$, 
        \item $L \in O\left(p\binom{n+d}{d} + \mathnew{n^2}\right)$.
    \end{itemize}
    Moreover, by \Cref{lem:SLP} and the estimates from \cite[Section
    9.2]{elliottgiesbrechtschostgeneral}, the integer constants in the
    straight-line program evaluating \(\polarsys{1}{\bff^\bA, \bu}\) have
    heights in \(\widetilde{O}\left( b + d\log(1 / \epsilon) + dn \right) \). 
    Combining all this data with \Cref{thm:compmultihomo} yields
    a bit complexity of
    \begin{align*}
        \widetilde{O}&\left(\left(p\binom{n+d}{d} + \mathnew{n^2}\right)\left(b 
            + d\log(1/\epsilon) + dn\right) + n\mathnew{^2}(b + 
            d\log(1/\epsilon) + dn)\right. \\
        & \indent \binom{n-1}{p-1}\binom{n}{p}d^{2p}(d-1)^{2n-2p}\left(\left(
            p\binom{n+d}{d} + \mathnew{n^2}\right) + (n+p)d + (n+p)^2\right) \\
        & \indent (n+p)(\log\left(b + d\log(1/\epsilon) + dn\right) + n + p)
        \biggr) \\
        \subset  \widetilde{O}&\left(n\mathnew{^2}(b + d\log(1/\epsilon) + dn)
        \binom{n-1}{p-1}\binom{n}{p}d^{2p}(d-1)^{2n-2p}\left(\binom{n+d}{d} 
        + nd + n^2\right)n^2\right) \\
        \subset  \widetilde{O}&\left(n\mathnew{^4}(b + d\log(1/\epsilon) + dn)
        \binom{n}{p}^{\mathnew{2}}\binom{n+d}{d}d^{2p}(d-1)^{2n-2p}\right)
    \end{align*}
    for a single call to \cite[Algorithm 2]{safeyschostcomplexity} on
    $\polarsys{1}{\bff^\bA, \bu}$.
    Since \cite[Algorithm 1]{elliottgiesbrechtschostgeneral} performs
    $O\left(\log(n) + \log(1/\epsilon)\right)$ times such a call, since there
    are $n-p+1$ distinct $\polarsys{i}{\bff^\bA, \bu}$, and since, by
    \cref{lem:Cnd}, the quantities \(\mathscr{C}_{\bn}(\bd_i)\) and
    \(\mathscr{H}_{\bn}(\bmeta_i, \bd_i)\) are all dominated by the ones of
    \(\polarsys{1}{\bff^\bA, \bu}\), we conclude that the overall cost is 
    \begin{align*}
        \widetilde{O}&\left(n\mathnew{^4}(b + d\log(1/\epsilon) + dn)
        \binom{n-1}{p-1}\binom{n}{p}\binom{n+d}{d}d^{2p}(d-1)^{2n-2p}\right. \\
        & \indent \left(\log(n) + \log(1/\epsilon)\right)(n-p+1)\biggr) \\
        \subset \widetilde{O}&\left(n\mathnew{^4}(n-p+1)(b + d\log(1/\epsilon) 
        + dn)\log(1/\epsilon)\binom{n}{p}^{\mathnew{2}}\binom{n+d}{d}d^{2p}
        (d-1)^{2n-2p}\right).
    \end{align*} 
    Moreover, by \Cref{thm:compmultihomo}, the output polynomials
    have degree at most
    \[\mathscr{C}_{\bm{n}}(\bm{d}) = \binom{n-1}{p-1}d^p(d-1)^{n-p}\]
    and height in 
    \begin{align*}
        \widetilde{O}&\left(n\mathnew{^2}(b + d\log(1/\epsilon) + dn)
        \binom{n}{p}d^{p}(d-1)^{n-p} + (n+p)\binom{n-1}{p-1}d^p(d-1)^{n-p}
        \right) \\
        \subset  \widetilde{O}&\left(n\mathnew{^2}(b + d\log(1/\epsilon) + 
        dn)\binom{n}{p}d^{p}(d-1)^{n-p}\right). 
    \end{align*}

    \new{In the setting of \Cref{thm:main}, we assume that $d>2$, and it thus
    follows that the factors $n^4(n-p+1)$ and $n^2$ appearing in the above
    expressions are poly-logarithmic expressions of $d^{p}(d-1)^{n-p}$, and can
    hence be omitted in the $\widetilde{O}$ notation. Doing so yields the result
    of \Cref{thm:main}.}

    \new{In the setting of \Cref{thm:main2}, we assume that $d=2$, and hence
    $d-1 = 1$ and $\binom{n+d}{d} \in O(n^2)$. Performing these substitutions in
    the above expressions yields the result of \Cref{thm:main2}.}
\end{proof}

\paragraph*{Acknowledgements.} Mark Giesbrecht and \'Eric Schost acknowledge
support from the Natural Sciences and Engineering Research Council of Canada
(NSERC) Discovery Grant programme.  Edern Gillot and Mohab Safey El Din are
supported by ANR Project ANR-22-CE91-0007 “EAGLES” and the FA8655-25-1-7469 of
the European Office of Aerospace Research and Development of the Air Force
Office of Scientific Research (AFOSR).





\begin{thebibliography}{99}
\bibitem{BGHM97}
B. Bank, M. Giusti, J. Heintz, and G. M. Mbakop,
\newblock Polar Varieties and Efficient Real Equation Solving: The Hypersurface Case,
\newblock {\em Journal of Complexity}, 13(1):5--27, 1997.
\newblock \url{https://doi.org/10.1006/jcom.1997.0432}

\bibitem{BGHM01}
B. Bank, M. Giusti, J. Heintz, and G. M. Mbakop,
\newblock Polar varieties and efficient real elimination,
\newblock {\em Mathematische Zeitschrift}, 238(1):115--144, 2001.
\newblock \url{https://doi.org/10.1007/PL00004896}

\bibitem{BGHP05}
B. Bank, M. Giusti, J. Heintz, and L. M. Pardo,
\newblock Generalized polar varieties: geometry and algorithms,
\newblock {\em Journal of Complexity}, 21(4):377--412, 2005.
\newblock \url{https://doi.org/10.1016/j.jco.2004.10.001}

\bibitem{BGHSS10}
B. Bank, M. Giusti, J. Heintz, M. Safey El Din, and \'E. Schost,
\newblock On the geometry of polar varieties,
\newblock {\em Applicable Algebra in Engineering, Communication and Computing}, 21:33--83, 2010.
\newblock \url{https://doi.org/10.1007/s00200-009-0117-1}

\bibitem{basupollackroy}
S. Basu, R. Pollack, and M.-F. Roy,
\newblock {\em Algorithms in Real Algebraic Geometry},
\newblock Springer, Heidelberg, 2006.
\newblock \url{https://doi.org/10.1007/3-540-33099-2}

\bibitem{baurstrassen}
W. Baur and V. Strassen,
\newblock The complexity of partial derivatives,
\newblock {\em Theoretical Computer Science}, 22(3):317--330, 1983.
\newblock \url{https://doi.org/10.1016/0304-3975(83)90110-X}

\bibitem{CKW11}
X. Chen, N. Kayal, and A. Wigderson,
\newblock Partial Derivatives in Arithmetic Complexity and Beyond,
\newblock {\em Foundations and Trends in Theoretical Computer Science}, 6(1--2):1--138, 2011.
\newblock \url{http://dx.doi.org/10.1561/0400000043}

\bibitem{collins}
G. E. Collins,
\newblock Quantifier elimination for real closed fields by cylindrical algebraic decomposition,
\newblock in {\em Automata Theory and Formal Languages}, Springer, 1975, pp. 134--183.
\newblock \url{https://doi.org/10.1007/3-540-07407-4_17}

\bibitem{elliottgiesbrechtschostgeneral}
J. Elliott, M. Giesbrecht, and \'E. Schost,
\newblock Bit complexity for computing one point in each connected component of a smooth real algebraic set,
\newblock {\em Journal of Symbolic Computation}, 116:72--97, 2023.
\newblock \url{https://doi.org/10.1016/j.jsc.2022.08.010}

\bibitem{GaSa24}
L. Gaillard and M. Safey El Din,
\newblock Solving parameter-dependent semi-algebraic systems,
\newblock in {\em Proc. ISSAC 2024}, ACM, 2024, pp. 447--456.
\newblock \url{https://doi.org/10.1145/3666000.3669718}

\bibitem{GreSa14}
A. Greuet and M. Safey El Din,
\newblock Probabilistic Algorithm for Polynomial Optimization over a Real Algebraic Set,
\newblock {\em SIAM Journal on Optimization}, 24(3):1313--1343, 2014.
\newblock \url{https://doi.org/10.1137/130931308}

\bibitem{Le_2021}
H. P. Le and M. Safey El Din,
\newblock Faster One Block Quantifier Elimination for Regular Polynomial Systems of Equations,
\newblock in {\em Proc. ISSAC 2021}, ACM, 2021.
\newblock \url{http://dx.doi.org/10.1145/3452143.3465546}

\bibitem{LeSa22}
H. P. Le and M. Safey El Din,
\newblock Solving parametric systems of polynomial equations over the reals through Hermite matrices,
\newblock {\em Journal of Symbolic Computation}, 112:25--61, 2022.
\newblock \url{https://doi.org/10.1016/j.jsc.2021.12.002}

\bibitem{SaSc17}
M. Safey El Din and \'E. Schost,
\newblock A Nearly Optimal Algorithm for Deciding Connectivity Queries in Smooth and Bounded Real Algebraic Sets,
\newblock {\em Journal of the ACM}, 63(6):48:1--48:37, 2017.
\newblock \url{https://doi.org/10.1145/2996450}

\bibitem{safeyschostcomplexity}
M. Safey El Din and \'E. Schost,
\newblock Bit complexity for multi-homogeneous polynomial system solving—Application to polynomial minimization,
\newblock {\em Journal of Symbolic Computation}, 87:176--206, 2018.
\newblock \url{https://doi.org/10.1016/j.jsc.2017.08.001}

\bibitem{safeyschostonepoint}
M. Safey El Din and \'E. Schost,
\newblock Polar varieties and computation of one point in each connected component of a smooth real algebraic set,
\newblock in {\em Proc. ISSAC 2003}, ACM, 2003, pp. 224--231.
\newblock \url{https://doi.org/10.1145/860854.860901}

\end{thebibliography}


\end{document}